\newtheorem{theorem}{Theorem}
\newtheorem{lemma}{Lemma}
\newcommand{\beq}{\begin{equation}}
\newcommand{\enq}{\end{equation}}
\newcommand{\bel}{\begin{lemma}}
\newcommand{\enl}{\end{lemma}}
\newcommand{\bet}{\begin{theorem}}
\newcommand{\ent}{\end{theorem}}
\newcommand{\tr}{\mathrm{Tr}}
\newcommand*{\cH}{\mathcal{H}}
\newcommand*{\cF}{\mathcal{F}}
\newcommand*{\cD}{\mathcal{D}}
\newcommand*{\cG}{\mathcal{G}}
\newcommand*{\cK}{\mathcal{K}}
\newcommand*{\cN}{\mathcal{N}}
\newcommand*{\cS}{\mathcal{S}}
\newcommand*{\cT}{\mathcal{T}}
\newcommand*{\cE}{\mathcal{E}}
\newcommand*{\cR}{\mathcal{R}}
\newcommand*{\renyi}{R\'{e}nyi }
\begin{document}


\title{Fundamental bound on the reliability of quantum information transmission}



\author{Naresh Sharma}
\email{nsharma@tifr.res.in}



\author{Naqueeb Ahmad Warsi}
\email{naqueeb@tifr.res.in}

\affiliation{
Tata Institute of Fundamental Research (TIFR),
Mumbai 400005, India}


\begin{abstract}
Information theory tells us that if the rate of sending information across a noisy channel were
above the capacity of that channel, then the transmission would necessarily
be unreliable. For classical
information sent over classical or quantum channels, one could, under certain conditions,
make a stronger statement that the reliability of the transmission shall decay exponentially
to zero with the number of channel uses and the proof of this statement typically
relies on a certain fundamental bound on the reliability of the transmission. Such a statement
or the bound has never been
given for sending quantum information. We give this bound
and then use it to give the first 
example where the reliability of sending quantum information at rates above the capacity
decays exponentially to zero. We also show that our framework can be used
for proving generalized bounds on the reliability.
\end{abstract}

\maketitle
Capacity of a given channel is defined as the highest rate of sending information (measured
as the amount of information sent per channel use) reliably in the limit of large
number of channel uses \cite{shannon1948, covertom, wilde-book}. Converse of
the channel capacity theorem
tells us that sending information at rates higher than capacity would necessarily
be unreliable. A strong converse additionally tells us that the reliability would be
very small and, in some cases more explicitly, would decay
exponentially to zero with the number of channel uses. Not all channels have a strong
converse \cite{no-strong-converse-2011}.

Such strong converses are available for sending classical information across classical
or quantum channels (under certain conditions) and are typically shown using a fundamental
bound on the reliability. But, somewhat surprisingly, there has been
no such strong converse when quantum information is sent across a quantum channel
and an equivalent bound has been unknown.
We first prove this bound in full generality and then apply it to give the first
example of a strong converse for quantum information transfer where the reliability
decays exponentially to zero with the number of channel uses.

Strong converse establishes capacity as a sharp thresh- old for information transmission
and is clearly of great theoretical interest. It also has interesting applications in cryptography.
Let Alice have an unlimited noise-free quantum memory to store qubits while Bob has a
noisy quantum memory (also called the noisy-storage assumption). If the strong converse
holds for the quantum channel modelling the noise that acts on Bob's memory, then Alice
and Bob can implement any two-party cryptographic task securely
\cite{crypto-sconv-2011}.

We now provide a more detailed but high level overview of our results.
A protocol to transfer information (classical or quantum) across a noisy
communication channel is characterised by the amount of information ($\cR$) it conveys
and the reliability ($\mathbbm{F}$) it promises.
Typical definitions of reliability ensure that $\mathbbm{F} \in [0,1]$, where
$\mathbbm{F} \approx 1$ would imply a highly reliable information transfer,
i.e., information sent and reconstructed at the receiver
are very close to each other ($\mathbbm{F}=1$ implies
an exact match) and $\mathbbm{F} \approx 0$ would imply a highly unreliable transmission.

Information could be classical or quantum.
A classical information is an unknown sequence of bits
(such as an email message) that Alice wants to send to Bob.
A quantum information transfer can also be looked upon as entanglement transfer
\cite{wilde-book}.
Alice has a quantum system $S$ (information) that is entangled with a reference system $A$
and Alice (who doesn't have access to $A$) wishes to send a quantum system
through a noisy environment (that doesn't act upon $A$)
such that at the end of the protocol,
the state of $A$ and Bob's system (say $\hat{S}$) is close to the state of $A$ and $S$.

Fundamental bound that we seek for all $s \in [-\beta,0)$
and protocol parameters $\pmb{\alpha}$ is given by
\beq
\label{fund-bound}
\mathbbm{F} \leq e^{s \cR - E_0(s,\pmb{\alpha})},
\enq
where $E_0(0,\pmb{\alpha}) = 0$, the derivative of $E_0(s,\pmb{\alpha})$ w.r.t. $s$
at $s=0$ gives us a measure of information that could be transferred across the
channel reliably, and $\beta$ is a constant independent of $\pmb{\alpha}$
and $\cR$ that, for our purposes, is $0.5$.

$E_0(s,\pmb{\alpha}) - s\cR$ is known as the Gallager's exponent named after
R. G. Gallager who first proposed it in a different setting \cite{gallager-expo-1965}. The
bound in Eq. \eqref{fund-bound}
was shown when classical information is sent across a classical 
channel (Arimoto \cite{arimoto-1973-converse}) and quantum channel (Ogawa and
Nagaoka \cite{ogawa-1999-converse}). Winter gave another proof of the strong converse
for sending classical information over quantum channels without the Gallager's exponent
\cite{winter-99-converse}. Extensions of the above results
are due to K\"{o}nig and Wehner (Ref. \cite{konig-2009-converse}) and further upper bounds
to fidelity for entanglement unassisted and assisted codes are given by Matthews and
Wehner \cite{matt-wehner}.

The search for quantum Gallager's exponent
when quantum information is sent across a quantum channel has been a longstanding
problem and we provide it in this paper. Table \ref{tab1} lists these various cases.

Our proof relies on using the monotonicity property (mentioned below) satisfied by
many information divergences. The idea of proving bounds on the reliability for classical
protocols using monotonicity dates back to Blahut's work \cite{blahut1976} and has been
used further more recently \cite{han-verdu1994, sharmapra2008, polyanskiy-2010-converse}.

We now provide a brief and heuristic explanation as to why this bound is considered 
fundamental.
Let us define for a single use of channel that
$I(\pmb{\alpha}) = \partial E_0(s,\pmb{\alpha})/\partial s |_{s=0}$
and $C = \max_{\pmb{\alpha}^\prime} I(\pmb{\alpha})$ be called the channel capacity,
where $\pmb{\alpha}^\prime$ is the part of $\pmb{\alpha}$ that can be changed by
fine-tuning the protocol \cite{shannon1948, covertom, wilde-book}.
There are parameters
in the setup that can't be changed such as the channel and there may be some practical
constraints such as energy used for transmission that the protocol must obey.
Since $E_0$ obeys $E_0(0,\pmb{\alpha})=0$,
for a negative $s$ near $0$, $-E_0(s,\pmb{\alpha})$
$\approx -s I(\pmb{\alpha})$ $ \leq -s C$ and the above bound could be weakened to give
$\mathbbm{F} \lessapprox e^{s (\cR - C)}$. Hence, if $\cR > C$, then $\mathbbm{F}$ is always exponentially
bounded away from $1$. If we use the channel $n$ times for sending
$n \cR$ amount of information,
then we could, under certain conditions, write the above bound as
$\mathbbm{F} \lessapprox e^{s n (\cR - C)}$. If $\cR > C$, then $\mathbbm{F} \to 0$ exponentially with $n$, i.e.,
if we are pumping information into the channel higher
than the capacity, then the transmission would be quite unreliable.

\begin{table}[t]
\caption{Gallager's exponent (that gives an exponential upper
bound on reliability) for various cases.}
\begin{center}
\begin{ruledtabular}
\begin{tabular}{c|c|c}
{\bf Information } & {\bf Channel } & {\bf Proposed by} \\ \hline
Classical & Classical & Arimoto (1973) \\ \hline
Classical & Quantum & Ogawa \& Nagaoka (1999) \\ \hline
Quantum & Quantum & (this paper)
\end{tabular}
\end{ruledtabular}
\end{center}
\label{tab1}
\end{table}

We shall frequently deal with the quantum \renyi divergences in this paper that for parameter
$\lambda \geq 0$ are given by
\beq
D_\lambda(\rho || \sigma) = \frac{1}{\lambda - 1} \ln \tr \rho^{\lambda} \sigma^{1-\lambda},
\enq
where limit is taken at $\lambda = 1$. We shall confine ourselves with $\lambda \in (1,2]$
in this paper and deal with finite dimensional quantum systems.
The following two properties are needed later. \\
\noindent {\bf Property 1}: It has been shown (see Example 4.5 in Ref. \cite{hiai-2011})
that for the chosen range of $\lambda$, $D_\lambda$
satisfies the monotonicity property, i.e., for any two un-normalised density matrices (that
are positive but need not have a unit trace) $\rho$, $\sigma$ and a completely
positive and trace preserving (CPTP) quantum operation
$\cN$ acting on them, we have
\beq
\label{mono}
D_\lambda(\rho || \sigma) \geq D_\lambda\left[\cN(\rho) || \cN(\sigma)\right].
\enq
\noindent {\bf Property 2}: We shall also need the following queer property that
is not difficult to prove. Let $\Pi_0 = \ket{0}\bra{0}$ and $\Pi_1 = \ket{1}\bra{1}$
be two projectors with $\Pi_0 + \Pi_1 = \mathbbm{1}$.
Let $\alpha \in [0,1]$, $\beta \in (0,1]$, $\rho = \alpha \Pi_0 + (1-\alpha) \Pi_1$,
$\sigma = \beta \Pi_0 + (1/\beta-\beta) \Pi_1$, and let us define
\beq
\label{dummy1}
{\mathbbm{D}}_\lambda(\alpha || \beta) := D_{\lambda}(\rho || \sigma).
\enq
Note that $\sigma \geq 0$ but does not have unit trace.
Then ${\mathbbm{D}}_\lambda(\alpha||\beta)$ is independent of the choice of
$\{\Pi_0,\Pi_1\}$ and increasing for all $\alpha \geq \beta$.

We now derive a quantity from the \renyi divergence as
\begin{align}
K_{\lambda}(A \rangle B)_\rho & := \inf_{\sigma^B \in \cS(\cH_B)} D_{\lambda}(\rho^{AB} ||
\mathbbm{1} \otimes \sigma^B),
\end{align}
where $\cH_B$ is the Hilbert space describing quantum
system $B$ and
$\cS(\cH_B)$ is the set of all density matrices of $\cH_B$, and $\mathbbm{1}$ is the
identity matrix whose dimensions should be clear from the context.
Csisz\'{a}r defined a similar quantity in the classical case and related it
to the Gallager's exponent \cite{csiszar-1995-rate}.
The following properties of $K_{\lambda}(A \rangle B)_\rho$ would be useful later.
\begin{lemma}
Let $\cE^{B \to C}$ be a quantum operation and
$\rho^{AC} = \cE^{B \to C}(\rho^{AB})$. Then
\[
K_{\lambda}(A \rangle B)_\rho \geq K_{\lambda}(A \rangle C)_\rho.
\]
\end{lemma}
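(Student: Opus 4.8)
The plan is to reduce the claim to the data-processing inequality for $D_\lambda$ supplied by Property 1. The essential observation is that acting with $\cE^{B\to C}$ on the $B$ subsystem of a bipartite operator coincides with applying the map $\mathrm{id}_A\otimes\cE^{B\to C}$, which is itself CPTP (a tensor product of CPTP maps is CPTP), so the monotonicity inequality \eqref{mono} applies to it. I would therefore fix an arbitrary feasible $\sigma^B$ from the definition of $K_\lambda(A\rangle B)_\rho$ and push both arguments of $D_\lambda$ simultaneously through this enlarged channel.

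Concretely, fix any $\sigma^B\in\cS(\cH_B)$. Applying Property 1 to $\mathrm{id}_A\otimes\cE^{B\to C}$ gives
\[
D_\lambda(\rho^{AB}\,||\,\mathbbm{1}\otimes\sigma^B)\;\geq\;D_\lambda\big((\mathrm{id}_A\otimes\cE)(\rho^{AB})\,||\,(\mathrm{id}_A\otimes\cE)(\mathbbm{1}\otimes\sigma^B)\big).
\]
By hypothesis the first argument on the right is exactly $\rho^{AC}$, while the second factorizes as $(\mathrm{id}_A\otimes\cE)(\mathbbm{1}\otimes\sigma^B)=\mathbbm{1}\otimes\cE(\sigma^B)$. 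Because $\cE$ is completely positive and trace preserving, $\tau^C:=\cE(\sigma^B)$ is again a genuine density matrix, i.e.\ $\tau^C\in\cS(\cH_C)$. Hence the right-hand side equals $D_\lambda(\rho^{AC}\,||\,\mathbbm{1}\otimes\tau^C)$, which is in turn bounded below by the infimum over all of $\cS(\cH_C)$, namely $K_\lambda(A\rangle C)_\rho$.

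Combining these, every feasible $\sigma^B$ satisfies $D_\lambda(\rho^{AB}\,||\,\mathbbm{1}\otimes\sigma^B)\geq K_\lambda(A\rangle C)_\rho$, and taking the infimum over $\sigma^B$ on the left yields $K_\lambda(A\rangle B)_\rho\geq K_\lambda(A\rangle C)_\rho$, as required. This argument is short and its content is entirely in the reduction above; the only points needing care---and the closest thing to an obstacle---are bookkeeping ones. First, one must confirm that $\mathrm{id}_A\otimes\cE$ is CPTP so that Property 1 is genuinely applicable, and that the argument stays within the admissible range $\lambda\in(1,2]$. Second, one must check that $\cE$ maps $\cS(\cH_B)$ into $\cS(\cH_C)$ so that $\cE(\sigma^B)$ lies in the feasible set defining $K_\lambda(A\rangle C)_\rho$. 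Note that surjectivity is not needed: the image $\{\cE(\sigma^B):\sigma^B\in\cS(\cH_B)\}$ may be a proper subset of $\cS(\cH_C)$, but this only helps, since the infimum defining $K_\lambda(A\rangle C)_\rho$ runs over the larger set and can therefore only be smaller.
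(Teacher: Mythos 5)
Your proof is correct and follows essentially the same route as the paper's: both apply the monotonicity of $D_\lambda$ under the CPTP map $\mathrm{id}_A\otimes\cE$ to both arguments, note that $(\mathrm{id}_A\otimes\cE)(\mathbbm{1}\otimes\sigma^B)=\mathbbm{1}\otimes\cE(\sigma^B)$ with $\cE(\sigma^B)$ a valid density matrix, and then pass to the infimum. The only cosmetic difference is that the paper picks a $\delta$-near-optimal $\sigma^B$ and lets $\delta\to 0$, whereas you fix an arbitrary $\sigma^B$ and take the infimum at the end; these are equivalent.
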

\begin{proof}
See Appendix.
\end{proof}
\begin{lemma}
\label{exp1}
Let $\rho^{A A^\prime}$ be any quantum state in $AA^\prime$, and
$\rho^{A B} = \cN^{A^\prime \to B} (\rho^{AA^\prime})$. Then
\[
K_{\lambda}(A \rangle B)_\rho = \frac{\lambda}{1 - \lambda}
E_0(\lambda^{-1}-1, \cN^{A^\prime \to B})_\rho,
\]
where for $s = \lambda^{-1} - 1$,
\[
E_0(s, \cN^{A^\prime \to B})_\rho :=
-\ln \tr \left[ \tr_{A} \left( \rho^{A B} \right)^{\frac{1}{s+1}}
\right]^{s+1}. ~~
\]
\end{lemma}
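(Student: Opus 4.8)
The plan is to evaluate the infimum defining $K_{\lambda}(A\rangle B)_\rho$ in closed form and then match it term by term to $E_0$. First I would expand the divergence: writing $X^B:=\tr_{A}\big[(\rho^{AB})^{\lambda}\big]\geq 0$ and using that $(\mathbbm{1}\otimes\sigma^B)^{1-\lambda}=\mathbbm{1}\otimes(\sigma^B)^{1-\lambda}$, tracing out $A$ reduces the quantity to $D_{\lambda}(\rho^{AB}\,\|\,\mathbbm{1}\otimes\sigma^B)=\tfrac{1}{\lambda-1}\ln\tr\big[X^B(\sigma^B)^{1-\lambda}\big]$. Since $\lambda\in(1,2]$ makes $1/(\lambda-1)>0$, the infimum over $\sigma^B\in\cS(\cH_B)$ is equivalent to minimizing the positive quantity $\tr\big[X^B(\sigma^B)^{1-\lambda}\big]$.

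The central step is this constrained minimization. I would propose the optimizer $\sigma^B_\star:=(X^B)^{1/\lambda}/\tr[(X^B)^{1/\lambda}]$; because it commutes with $X^B$, a direct computation using the identity $1+\tfrac{1-\lambda}{\lambda}=\tfrac1\lambda$ gives the value $\tr\big[X^B(\sigma^B_\star)^{1-\lambda}\big]=\big(\tr[(X^B)^{1/\lambda}]\big)^{\lambda}$. To certify that this is genuinely the infimum and not merely a critical point, I would invoke the operator convexity of $t\mapsto t^{1-\lambda}$ on $(0,\infty)$, which holds precisely because $1-\lambda\in[-1,0)$ for $\lambda\in(1,2]$. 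Operator convexity forces $\sigma^B\mapsto\tr\big[X^B(\sigma^B)^{1-\lambda}\big]$ to be convex on the convex set of density matrices, so any stationary point is the global minimum. Stationarity at $\sigma^B_\star$ follows from a short perturbation in its eigenbasis: there $X^B$ is diagonal, the divided-difference derivative factor collapses to a constant, and the first-order term becomes that constant times $\tr\Delta=0$ for every trace-preserving variation $\Delta$.

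Finally I would substitute back. The infimum yields $K_{\lambda}(A\rangle B)_\rho=\tfrac{\lambda}{\lambda-1}\ln\tr[(X^B)^{1/\lambda}]$. Setting $s=\lambda^{-1}-1$, so that $s+1=\lambda^{-1}$ and hence $1/(s+1)=\lambda$, the definition unwinds to $E_0(s,\cN^{A^\prime\to B})_\rho=-\ln\tr\big[\big(\tr_{A}[(\rho^{AB})^{\lambda}]\big)^{1/\lambda}\big]=-\ln\tr[(X^B)^{1/\lambda}]$. The elementary identity $\tfrac{\lambda}{\lambda-1}=-\tfrac{\lambda}{1-\lambda}$ then gives $K_{\lambda}(A\rangle B)_\rho=\tfrac{\lambda}{1-\lambda}E_0(s,\cN^{A^\prime\to B})_\rho$, which is the claimed equality.

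I expect the main obstacle to be the rigorous justification that $\sigma^B_\star$ actually attains the infimum rather than just solving the stationarity equation; the operator-convexity argument settles this, but it must be paired with a support check. Since $1-\lambda<0$, the objective $\tr\big[X^B(\sigma^B)^{1-\lambda}\big]$ diverges whenever $\mathrm{supp}(\sigma^B)$ does not contain $\mathrm{supp}(X^B)$, so the effective minimization is over the relative interior of the state space, exactly where $\sigma^B_\star$ (whose support equals that of $X^B$) lives, and there the infimum is attained. A brief remark confirming finiteness of the divergence and the restriction to this support region would complete the argument.
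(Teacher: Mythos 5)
Your proof is correct, and it arrives at the same optimizer and the same closed form as the paper, but it certifies optimality by a genuinely different route. The paper's proof goes through a ``quantum Sibson identity'': it rewrites $D_{\lambda}(\rho^{AB}\|\mathbbm{1}\otimes\sigma^{B})$ exactly as $D_{\lambda}(\sigma^{*}\|\sigma^{B})+\tfrac{\lambda}{\lambda-1}\ln\tr\big[\tr_{A}(\rho^{AB})^{\lambda}\big]^{1/\lambda}$ with $\sigma^{*}=(X^{B})^{1/\lambda}/\tr[(X^{B})^{1/\lambda}]$ (your $\sigma^{B}_{\star}$), so that the residual term is a \renyi divergence between two \emph{normalized} states; its nonnegativity, with equality iff $\sigma^{B}=\sigma^{*}$, settles the infimum in one line with no calculus, no convexity, and no support bookkeeping. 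You instead prove the same thing by combining operator convexity of $t\mapsto t^{1-\lambda}$ on $(0,\infty)$ (valid since $1-\lambda\in[-1,0)$ for $\lambda\in(1,2]$ --- and operator, not merely scalar, convexity is indeed what is needed here since $X^{B}$ and $\sigma^{B}$ need not commute) with a first-order stationarity check via the divided-difference derivative, plus the observation that the objective blows up off the support of $X^{B}$. Both arguments are sound; the algebraic completion-to-a-divergence in the paper is shorter and sidesteps the Fr\'{e}chet-derivative and boundary discussion entirely, while your argument is self-contained in the sense that it does not require guessing the Sibson decomposition in advance --- the optimizer falls out of the stationarity condition. If you keep your version, do state explicitly that the variation $\Delta$ ranges over traceless Hermitian perturbations supported where $\sigma^{B}_{\star}$ is, which you gesture at but should pin down.
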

\begin{proof}
See Appendix.
\end{proof}

{\bf Information processing task}:
Suppose a quantum system $S$ and a reference system $A$ have a state $\ket{\phi}^{AS}$.
Alice only has access to the system $S$ and not to $A$. Alice wants to send her part of the
shared state with $A$ to Bob using $n$ independent uses of a quantum
channel $\cN^{A^{\prime} \to B}$ such that at the end of the communication protocol chain,
Bob's shared state with the reference $A$ is arbitrarily close to the state Alice shared
with $A$. We shall call $\cR$ to be the communication rate and is given by
$\cR := \ln |S| / n$, where $|S|$ is the dimension of $\cH_S$.
We shall assume that the state of $S$ is given by
${\mathbbm{1}}/|S|$, i.e., the completely mixed state.

To this end, Alice performs an encoding operation given by $\cE^{S \to A^{\prime n}}$ to get
$\rho^{AA^{\prime n}} = \cE^{S \to A^{\prime n}} \left( \phi^{AS} \right)$.
Alice transmits the system $A^{\prime n}$ over
$\cN^{A^{\prime n} \to B^n} = \left(\cN^{A^\prime \to B} \right)^{\otimes n}$
and Bob receives the state
$\rho^{AB^n} = \cN^{A^{\prime n} \to B^n} \left[ \cE^{S \to A^{\prime n}}
\left( \phi^{AS} \right) \right]$.
Bob applies a decoding operation on its part
of the received state to get
$\rho^{A\hat{S}} = \cT^{B^n \to \hat{S}} \left\{
\cN^{A^{\prime n} \to B^n} \left[ \cE^{S \to A^{\prime n}} \left( \phi^{AS} \right) \right]
\right\}$.
The performance of the protocol is quantified by the fidelity given by
$F(\phi^{AS}, \rho^{A\hat{S}}) = \bra{\phi}^{AS} \rho^{A\hat{S}} \ket{\phi}^{AS}$.
If a protocol promises a fidelity not smaller than $\mathbb{F}$,
then we shall refer to such a protocol as a $(n,\cR,1-\mathbb{F})$ code.

The maximum rate per channel use for this protocol in the limit of large number of
channel uses and
fidelity arbitrarily close to $1$ was proved in a series of papers (see Refs.
\cite{schumacher1996,
schumacher-nielsen-1996, barnum-nielsen-1998, barnum-knill-2000, lloyd-1997,
shor-cap-2002, devetak-2005, hayden-shor-2008}).
Let the coherent information
of the channel $\cN^{A^\prime \to B}$ be defined as
$Q(\cN) := \max_{\rho^{AA^\prime}} I(A \rangle B)_\sigma$,
where $\sigma^{A B} = \cN^{A^\prime \to B} (\rho^{AA^\prime})$,
$I(A \rangle B)_\sigma := H(B)_\sigma - H(A,B)_\sigma$, and
$H(A)_\sigma$ is the von Neumann entropy of a quantum state $\sigma$ in system $A$
given by $H(A)_\sigma = -\tr \sigma \ln \sigma$.
The capacity of the channel is now given by the regularisation
$Q_{\mathrm{reg}}(\cN) := \lim_{n \to \infty} Q(\cN^{\otimes n})/n$.

We now prove an inequality involving the fidelity and the rate.
\bet
\label{quant-conv}
For $\mathbb{F} \geq e^{-n\cR}$, any $(n,\cR,1-\mathbb{F})$ code satisfies
\[
{\mathbbm{D}}_\lambda(\mathbb{F} || e^{-n\cR}) \leq K_{\lambda}(A \rangle B^n)_\rho.
\]
\ent
\begin{proof}
Let $\{\ket{i}^{AS}\}$ be an orthonormal basis for $\cH_{AS}$ with
$\ket{1}^{AS} = \ket{\phi}^{AS}$. Consider a CPTP quantum map
$\cF^{A\hat{S} \to C}$ where $|C|=2$ with Kraus operators
$\ket{0}^C \bra{1}^{AS}$, and $\{\ket{1}^C \bra{i}^{AS} \}$, $i=2,3,...,|AS|$. Let
$\Pi_0^C = 0^C$ and $\Pi_1^C = 1^C$. Then for all $\sigma^{\hat{S}}$, we have
$\cF(\rho^{A\hat{S}}) = \mathbb{F}^\prime \Pi^C_0 + (1-\mathbb{F}^\prime) \Pi^C_1$,
$\cF({\mathbbm{1}} \otimes \sigma^{\hat{S}}) = e^{-n\cR} \Pi^C_0 +
(e^{n\cR}-e^{-n\cR}) \Pi^C_1$,
where $\mathbb{F}^\prime = \bra{\phi}^{AS} \rho^{A\hat{S}} \ket{\phi}^{AS}$.
We now have the following inequalities
\begin{align*}
K_{\lambda}(A \rangle B^n)_\rho & \stackrel{a}{\geq} \
\inf_{\sigma^{\hat{S}}} D_{\lambda}(\rho^{A\hat{S}} || {\mathbbm{1}} \otimes \sigma^{\hat{S}}) \\
& \stackrel{b}{\geq} \inf_{\sigma^{\hat{S}}}
D_{\lambda} \Big[ \mathbb{F}^\prime \Pi^C_0 + (1-\mathbb{F}^\prime) \Pi^C_1 || ~~~~~~~ \nonumber \\
& ~~~~~~ e^{-n\cR} \Pi^C_0 + (e^{n\cR}-e^{-n\cR}) \Pi^C_1 \Big] \\
& \stackrel{c}{=} {\mathbbm{D}}_\lambda(\mathbb{F}^\prime || e^{-n\cR}) \\
& \stackrel{d}{\geq} {\mathbbm{D}}_\lambda(\mathbb{F} || e^{-n\cR}),
\end{align*}
where $a$ and $b$ follow from the data processing inequality and the definition
of $K_{\lambda}$, $c$ follows since the quantity
${\mathbbm{D}}_\lambda(\mathbb{F}^\prime || e^{-n\cR})$ is independent of
$\sigma^{\hat{S}}$, and
$d$ from the Property 2 of ${\mathbbm{D}}_\lambda$.
\end{proof}
The constraint  $\mathbb{F} \geq e^{-n\cR}$ may not be
seen as weakening the bound because, if
the constraint is violated, i.e., $\mathbb{F} \leq e^{-n\cR}$, then this, by itself,
would imply an exponential convergence of $\mathbb{F}$ to $0$.
We first note that
\beq
\label{temp3002}
{\mathbbm{D}}_\lambda(\mathbb{F} || e^{-n\cR}) \geq \frac{\lambda}{\lambda - 1}
\ln \mathbb{F} + n \cR
\enq
and it follows from Lemma \ref{exp1} and Theorem \ref{quant-conv} that
\beq
\mathbbm{F} \leq e^{s n \cR - E_0[s, (\cN^{A^\prime \to B})^{\otimes n}]_\rho},
\enq
which gives us the quantum Gallager's exponent.
The properties of $E_0$ are studied by the following theorem.
\begin{theorem}
\label{qexp}
For any quantum state $\sigma^{AB}$, $s \in [-1/2,0)$, the function
\[
g(s) := -\ln \tr \left[ \tr_A \left( \sigma^{AB} \right)^{1/(s+1)} \right]^{s+1},
\]
satisfies
\begin{align*}
g(0) & = 0, \\
\label{coherent}
\frac{\partial g(s)}{\partial s}\Big|_{s=0} & = I(A\rangle B)_{\sigma},
\end{align*}
and $g(s) + (s+1) \ln |A|$ is an increasing function in $s$.
\end{theorem}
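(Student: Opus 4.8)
The plan is to treat the three assertions separately, working throughout with the substitution $\lambda = 1/(s+1)$, so that $s\in[-1/2,0)$ corresponds to $\lambda\in(1,2]$ and $g(s) = -\ln\tr[\tr_A(\sigma^{AB})^\lambda]^{1/\lambda}$. The value $g(0)=0$ is immediate: at $\lambda=1$ the inner operator is $\tr_A\sigma^{AB}=\sigma^B$, its first power has unit trace because $\sigma^{AB}$ is normalised, and $-\ln 1=0$.

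For the derivative I would write $g = -\ln f$ with $f(s) = \tr[\tr_A(\sigma^{AB})^{1/(s+1)}]^{s+1}$, so that $g'(0) = -f'(0)$ since $f(0)=1$. Passing to the variable $t=s+1$ and differentiating $\tr Y(t)^t$, where $Y(t)=\tr_A(\sigma^{AB})^{1/t}$, produces two terms by the chain rule: the explicit exponent dependence $\tr[Y(t)^t\ln Y(t)]$ and the term $t\,\tr[Y(t)^{t-1}\dot Y(t)]$ coming from the $t$-dependence of $Y$. Evaluating at $t=1$, where $Y(1)=\sigma^B$ and $Y(1)^{t-1}=\mathbb{1}$ on the support, the first term gives $\tr[\sigma^B\ln\sigma^B]=-H(B)_\sigma$, while $\dot Y(1)=\tr_A[-\sigma^{AB}\ln\sigma^{AB}]$ makes the second term $\tr\dot Y(1)=H(A,B)_\sigma$; hence $g'(0)=H(B)_\sigma-H(A,B)_\sigma=I(A\rangle B)_\sigma$. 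I expect this computation to be the main obstacle, both because it requires the trace derivative identity $\frac{d}{d\epsilon}\tr(M+\epsilon N)^t|_{0}=t\,\tr[M^{t-1}N]$ for $M\ge 0$ and because the collapse $Y^{t-1}\to\mathbb{1}$ at $t=1$ must be justified on the support of $\sigma^B$ (a full-rank assumption, or a limiting argument, removes the ambiguity).

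For monotonicity I would recast $G(s):=g(s)+(s+1)\ln|A|$ as a genuine \renyi divergence. Using the closed form of Lemma~\ref{exp1}, $g(s)=s\,K_\lambda(A\rangle B)_\sigma$, together with the elementary identity $D_\lambda(\sigma^{AB}\,\|\,\frac{\mathbb{1}^A}{|A|}\otimes\tau^B)=\ln|A|+D_\lambda(\sigma^{AB}\,\|\,\mathbb{1}^A\otimes\tau^B)$, I would set $K'_\lambda:=\inf_{\tau^B}D_\lambda(\sigma^{AB}\,\|\,\frac{\mathbb{1}^A}{|A|}\otimes\tau^B)=\ln|A|+K_\lambda(A\rangle B)_\sigma$ and verify the rearrangement $G(s)=\frac{1-\lambda}{\lambda}K'_\lambda+\ln|A|$. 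Now $K'_\lambda\ge 0$, since it is an infimum of \renyi divergences between the normalised states $\sigma^{AB}$ and $\frac{\mathbb{1}^A}{|A|}\otimes\tau^B$, and $K'_\lambda$ is non-decreasing in $\lambda$ because $D_\lambda$ is non-decreasing in its order and an infimum of non-decreasing functions stays non-decreasing. Since the prefactor $c(\lambda)=\frac{1-\lambda}{\lambda}$ is negative and itself decreasing on $(1,2]$, a two-step comparison for $1<\lambda_1<\lambda_2\le 2$, namely $c(\lambda_1)K'_{\lambda_1}\ge c(\lambda_1)K'_{\lambda_2}\ge c(\lambda_2)K'_{\lambda_2}$, shows that $G$ is non-increasing in $\lambda$ and therefore non-decreasing in $s$, which is the claim. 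The only external input needed here is the monotonicity of $D_\lambda$ in $\lambda$, a standard property of the \renyi divergences used throughout the paper.
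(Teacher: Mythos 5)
Your proposal is correct. The first two claims are handled essentially as in the paper: the paper also computes $\partial g/\partial s|_{s=0}$ by differentiating $\tr\kappa_1^{s+1}$ with $\kappa_1=\tr_A(\sigma^{AB})^{1/(s+1)}$ via the trace-derivative rule (it cites Lemma 4 of Ogawa--Nagaoka and works in the eigenbasis $\sigma^{AB}=\sum_i\lambda_i\ket{i}\bra{i}$, writing $\kappa_1=\sum_i\lambda_i^{1/(s+1)}\sigma_i$), arriving at $H(B)_\sigma-H(A,B)_\sigma$ exactly as you do. Where you genuinely diverge is the monotonicity of $G(s)=g(s)+(s+1)\ln|A|$. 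The paper's route is an operator inequality: with $E_i=\sqrt{\sigma_i/|A|}$ one has $\sum_iE_i^\dagger E_i=\mathbbm{1}$, and the operator Jensen inequality for the operator-concave function $x^{\alpha/\beta}$ yields $\bigl(\tfrac{1}{|A|}\sum_i\lambda_i^{1/\beta}\sigma_i\bigr)^\beta\leq\bigl(\tfrac{1}{|A|}\sum_i\lambda_i^{1/\alpha}\sigma_i\bigr)^\alpha$ for $1/2\leq\alpha\leq\beta<1$, whence $G(\alpha-1)\leq G(\beta-1)$ after taking traces. Your route instead uses the Sibson-identity closed form $g(s)=sK_\lambda(A\rangle B)_\sigma$ (Lemma 2 of the paper) to rewrite $G=\tfrac{1-\lambda}{\lambda}K_\lambda'+\ln|A|$ with $K_\lambda'\geq 0$ an infimum of \renyi divergences between normalised states, and then invokes monotonicity of $D_\lambda$ in its order together with the sign and monotonicity of the prefactor $\tfrac{1-\lambda}{\lambda}$; your two-step chain of inequalities is valid. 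The trade-off is that the paper's argument is self-contained at the operator level (its only nontrivial input is the Hansen--Pedersen operator Jensen inequality), whereas yours imports the standard fact that $\lambda\mapsto D_\lambda$ is non-decreasing (e.g.\ from convexity of $\lambda\mapsto\ln\tr\rho^\lambda\sigma^{1-\lambda}$ and $\tr\rho=1$) but stays entirely within the divergence framework the paper is built on, which arguably makes the structural reason for the $\ln|A|$ offset more transparent. Both arguments establish non-decrease, which is all the paper's own proof establishes as well.
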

\begin{proof}
See Appendix.
\end{proof}

We note here that only the two above mentioned properties of the quantum \renyi
divergence are used for our results. Hence,
if the \renyi divergence is replaced by any other divergence that
satisfies these two properties, then Theorem \ref{quant-conv} shall hold for
that divergence as well. The non-commutative hockey-stick divergence
that we now define is one such example that for $\rho, \sigma \geq 0$, and $\gamma \geq 1$ is given by $\cD(\rho || \sigma) = \tr(\rho - \gamma \sigma)^+$, where $\kappa^+$ is the
positive part of a Hermitian matrix
$\kappa = \kappa^+ - \kappa^-$, $\kappa^+, \kappa^- \geq 0$. It can be regarded as
a non-commutative generalisation of the classical $f$-relative entropy (see
Ref. \cite{csiszar-f-1970})
using the hockey stick function $f(x) = (x-\gamma)^+$ \cite{hull-finance-book}.
We similarly define a derived quantity as
\begin{align*}
\cK(A \rangle B)_\rho & := \inf_{\sigma^B \in \cS(\cH_B)} \cD(\rho^{AB} || \mathbbm{1} \otimes \sigma^B).
\end{align*}
%

\section{Quantum erasure channel with maximally entangled inputs}

We show that the fidelity would decrease exponentially with the number of channel uses for
rates above capacity for maximally entangled inputs that have the full Schmidt rank.

A quantum erasure channel transmits the input state with probability $1-p$ and
``erases" it, i.e., replaces it with an orthogonal erasure state with probability $p$
\cite{grassl-1997} (see also Ref. \cite{bennett-1996-pra}).
The dimension of the output Hilbert space is one larger than that of the input.

A quantum erasure channel $\cN_{p}^{A^\prime \to B}$, defined in
Ref. \cite{wilde-book}, is given by the following Kraus operators
$\Big\{ \sqrt{(1-p)} \sum_{i=1}^{|A^\prime|} \ket{i}^B\bra{i}^{A^\prime},$
$\sqrt{p}\ket {e}^B\bra{1}^{A^\prime},...,\sqrt{p}\ket {e}^B\bra{|A^\prime|}^{A^\prime} \Big\}$, 
$i = 1,..., |A^\prime|$, $p \in [0,1]$, $|B| = |A^\prime|+1$, 
$\left\{\ket{i}^{A^\prime}\right\}, \left\{\ket{i}^B\right\}$ are orthonormal bases in
$\cH_{A^\prime}$ and $\cH_B$ respectively, and $\ket{e}^B = \ket{j}^B$ for
$j = |B|$. The action of the channel  can be understood as follows 
\[
\cN_{p}^{{A^\prime} \to B}(\rho^{A{A^\prime}}) = (1-p) \sigma^{AB} + p \rho^A \otimes
\ket{e} \bra{e}^B.
\]
Let $\sigma^{AB}$ $= \cG^{A^\prime \to B} (\rho^{AA^\prime})$, where $\cG$
increases the dimension but leaves the state intact.
Then with probability $1-p$, the channel 
leaves the state as $\sigma^{AB}$
and with probability $p$, it erases the state and replaces by $\ket{e}^B$.
It is not difficult to see that $\sigma^{AB}$ is orthogonal
to $\rho^A \otimes \ket{e} \bra{e}^B$.

Taking this further for $n$ channel uses, let $\sigma^{AB^n}$
$= ( \cG^{A^\prime \to B} )^{\otimes n} (\rho^{AA^{\prime n}})$.
The output can be written as the sum of $2^n$ orthogonal density
matrices where each of these matrices results from $i$ erasures $i \in\left\{0,...,n \right\}$
and this occurs with probability $(1-p)^{n-i}p^i$. The number of states that have
suffered exactly $i$ erasures is ${n \choose i}$.

Let  $B_{i_1} \cdots B_{i_{n-k}}$ be the
quantum systems that have not suffered erasures and we could write the state in this
case using $\sigma^{AB^n}$ as
\begin{fleqn}
\begin{align*}
\zeta_{i_1,...,i_{n-k}}^{AB_{i_1} \cdots B_{i_n}}
~ = ~ & \sigma^{AB_{i_1} \cdots B_{i_{n-k}}} \otimes
\bigotimes_{j=1}^k \ket{e} \bra{e}^{B_{i_{n-k+j}}}.
\end{align*}
\end{fleqn}
It now follows that
\beq
\label{outstate}
\rho^{A B^n} = \sum_{2^n {\mathrm{terms}}}
\alpha_{k,n}  \times \zeta_{i_1,...,i_{n-k}}^{AB_{i_1} \cdots B_{i_n}},
\enq
where $\alpha_{k,n} = (1-p)^{n-k} p^k$.

To prove the strong converse, we find an upper bound for $K_{\lambda}(A \rangle B^n)$.
We assume that $\rho^{A A^{\prime n}}$ is a maximally entangled state with a Schmidt
rank of $d_A^n$ where $d_A = |A^\prime|$. Note that this is the capacity-achieving input
for this channel and $Q(\cN) = (1-2p)^+ \ln d_A$ is the single-letter
quantum capacity for this channel \cite{bennett-1997} (see also Ref. \cite{wilde-book}).
Note that $d_A^k \times \rho^{A A^\prime_1 \cdots A^\prime_{n-k}}$ is a projector of
rank $d_A^k$ and $\rho^{A^\prime_1 \cdots A^\prime_{n-k}}$ is
the maximally mixed state.

\begin{theorem}
\label{erasureconverse}
The strong converse holds for the quantum erasure channel for the above
chosen maximally entangled channel inputs.
\end{theorem}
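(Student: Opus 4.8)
The plan is to evaluate $K_\lambda(A\rangle B^n)_\rho$ in closed form and feed it into Theorem~\ref{quant-conv}. By Lemma~\ref{exp1}, $K_\lambda(A\rangle B^n)_\rho = \frac{\lambda}{1-\lambda}\,E_0(s,(\cN_p)^{\otimes n})_\rho$ with $s=\lambda^{-1}-1\in[-1/2,0)$ and $E_0 = -\ln\tr[\tr_A(\rho^{AB^n})^{1/(s+1)}]^{s+1}$, so the whole problem reduces to the single trace $\tr[\tr_A(\rho^{AB^n})^{m}]^{1/m}$ with $m:=1/(s+1)\in(1,2]$. First I would exploit the orthogonal-block structure of Eq.~\eqref{outstate}: the $2^n$ erasure configurations have mutually orthogonal supports on $B^n$ (the flag $\ket{e}$ is orthogonal to the information subspace), so the inner power acts blockwise, $(\rho^{AB^n})^m=\sum_{\mathrm{patterns}}\alpha_{k,n}^m\,\zeta^m$.

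Second, within a configuration with $k$ erasures the block $\zeta$ is a tensor product of rank-one pieces on the $n-k$ surviving $A_iB_i$ pairs and maximally mixed pieces $\mathbbm{1}^{A_i}/d_A$ on the $k$ erased registers; hence $\zeta$ carries $d_A^k$ equal eigenvalues, and after raising to the power $m$ and tracing out $A$ each surviving factor contributes $d_A^{(m-1)/m}$ to the trace while each erased factor contributes $d_A^{-(m-1)/m}$, with $(m-1)/m=-s$. Because the blocks remain orthogonal after $\tr_A$ (the $\ket{e}$ flags survive), the outer $(\cdot)^{s+1}$ power is again blockwise, and using $m(s+1)=1$ the weights collapse, $\alpha_{k,n}^{m(s+1)}=\alpha_{k,n}$. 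Summing the $\binom{n}{k}$ configurations with $k$ erasures yields the binomial identity
\begin{equation*}
\tr[\tr_A(\rho^{AB^n})^{m}]^{1/m}=d_A^{-sn}\bigl[(1-p)+p\,d_A^{2s}\bigr]^n,
\end{equation*}
so that $E_0(s,(\cN_p)^{\otimes n})_\rho = sn\ln d_A - n\ln[(1-p)+p\,d_A^{2s}]$.

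Third, I would plug this into the Gallager-type bound that follows from Theorem~\ref{quant-conv} and Eq.~\eqref{temp3002}, namely $\mathbb{F}\le e^{sn\cR-E_0}=e^{n f(s)}$ with $f(s):=s(\cR-\ln d_A)+\ln[(1-p)+p\,d_A^{2s}]$. A direct computation gives $f(0)=0$ and $f'(0)=\cR-(1-2p)\ln d_A$. Whenever $\cR>Q(\cN)=(1-2p)^+\ln d_A$ one has $f'(0)>0$ (for $p\ge 1/2$ this is immediate since then $Q=0$ and the leftover term $(2p-1)\ln d_A$ is nonnegative), so for $s$ slightly below $0$ the exponent $f(s)\approx f'(0)\,s$ is strictly negative; choosing any such $s\in[-1/2,0)$ gives $\mathbb{F}\le e^{nf(s)}\to 0$ exponentially. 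If instead $\mathbb{F}<e^{-n\cR}$, exponential decay holds trivially, so the hypothesis of Theorem~\ref{quant-conv} costs nothing.

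The main obstacle I anticipate is the bookkeeping that makes the trace factorize: one must verify that orthogonality of the erasure blocks is \emph{preserved} through both the inner power $(\cdot)^m$ and the partial trace over $A$, and that the maximally mixed $A$-registers on the erased sites produce exactly the per-configuration weight $d_A^{-s(n-2k)}$ --- this is precisely the step that generates the $(1-2p)$ threshold and its match with the single-letter quantum capacity. Everything else is the binomial summation and the elementary sign analysis of $f$ near $s=0$.
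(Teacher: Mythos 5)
Your proposal is correct and follows essentially the same route as the paper: decompose $\rho^{AB^n}$ into the $2^n$ orthogonal erasure blocks, compute the per-block contribution $d_A^{-s(n-2k)}$, sum the binomial to get the per-use exponent $E_0(s)=-\ln\bigl[(1-p)d_A^{-s}+p\,d_A^{s}\bigr]$ (your $s\ln d_A-\ln[(1-p)+p\,d_A^{2s}]$ is the identical expression), and conclude from $E_0'(0)=(1-2p)\ln d_A$ matching $Q(\cN)$. The only cosmetic difference is that you evaluate the trace exactly via the site-wise product structure of the maximally entangled input, whereas the paper reaches the same exponent as an upper bound via the quantum Sibson identity, monotonicity, and the rank-$d_A^{k}$ projector observation.
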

\begin{proof}
Note the following set of inequalities for $s = \lambda^{-1}-1$, $\lambda \in (1,2]$
\begin{fleqn}
\begin{align}
K_{\lambda}(A \rangle B^n) &\stackrel{a}= -\frac{1}{s}
\ln \tr\left[\tr_{A}(\rho^{AB^n})^\lambda\right]^{\frac{1}{\lambda}} \nonumber \\
& \stackrel{b} =  -\frac{1}{s} \ln \sum_{2^n {\mathrm{terms}}}
\alpha_{k,n}\tr\left[\tr_A(\zeta_{i_1,...,i_{n-k}}^{AB_{i_1} \cdots B_{i_n}})^\lambda\right]^\frac{1}{\lambda} ~~~~~~ \nonumber \\
\label{last}
& \stackrel{c} \leq -\frac{1}{s} \ln \Bigg\{ \sum_{2^n {\mathrm{terms}}} 
\alpha_{k,n} \nonumber \\
& ~~~~~~~~~~ \exp \left[\frac{-K_{\lambda}
\left(A \rangle A^\prime_{i_1} \cdots A^\prime_{i_{n-k}}\right)}{s} \right] \Bigg\} \nonumber,
\end{align}
\end{fleqn}
where $a$ follows from Lemma \ref{qsibson}, $b$ follows from \eqref{outstate} and the orthogonality of $\zeta$'s
and $c$ follows because $K_{\lambda}$ satisfies monotonicity
and Lemma \ref{qsibson}. Using the fact that 
$d_A^k\times\rho^{AA^\prime_{i_1} \cdots A^\prime_{i_{n-k}}}$ 
is a projector of rank $d_A^k$, we get 
$K_{\lambda}(A \rangle B^n) \leq n E_0(s)/s$ where we define (with some abuse of notation)
\[
E_0(s):= -\ln \Big[(1-p) d_A^{-s} + p d_A^{s} \Big]
\]
and $E_0(0)=0$.
Using \eqref{temp3002}, we have
\[
\mathbb{F} \leq \exp \left\{ n \left[ s\cR - E_0(s) \right] \right\}.
\]
Furthermore, for $p \in [0,1/2]$,
\[
\lim_{s \uparrow 0} \frac{E_0(s)}{s} = Q(\cN).
\]
Hence, for all $\cR > Q(\cN)$, $\exists$ $s \in [-1/2,0)$ s.t.
$\cR - E_0(s)/s$ $> 0$, and thus the strong converse holds.
For $p > 1/2$, $E_0^\prime(0)$ $< 0$ and hence, using similar arguments
as above, for any $\cR>0$, the strong converse holds.
\end{proof}

An alternate proof of Theorem \ref{erasureconverse} using the hockey stick divergence
is provided in the Appendix.

To summarise our results, we have given an exponential upper bound on the
reliability of quantum information transmission. The bound is fundamental in
the same vein as the bounds known for transmission of classical information
across classical/quantum channels (see Refs.
\cite{arimoto-1973-converse, ogawa-1999-converse, konig-2009-converse})
and holds under general conditions. We then apply
our bound to yield the first known example for exponential decay of reliability at
rates above capacity for quantum information transmission.

The authors gratefully acknowledge the comments by A. Winter.

\bibliography{master}
\appendix
\section{Appendix}
\section{Proof of Lemma $1$}

Note that for any $\delta > 0$, there exists a $\sigma^B$ such that
$K_{\lambda}(A \rangle B)_\rho \geq D_{\lambda}(\rho^{AB} || \mathbbm{1} \otimes \sigma^B) - \delta$.
Using the monotonicity property from (2) in the main text, we have
$K_{\lambda}(A \rangle B)_\rho
\geq D_{\lambda} \left[ \rho^{AC} || \mathbbm{1} \otimes \cE^{B \to C}(\sigma^B) \right] - \delta
\geq \inf_{\sigma^C} D_{\lambda}(\rho^{AC} || \mathbbm{1} \otimes \sigma^C) - \delta
= K_{\lambda}^{(c)}(A \rangle C)_\rho - \delta$.
Since this is true for any $\delta > 0$, the result follows.

\section{Proof of Lemma $2$}

The proof of Lemma $2$ follows straightforwardly from the definition of
${\cK}_{\lambda}(A \rangle B)$
and from the following Lemma.
\begin{lemma}[Quantum Sibson identity]
\label{qsibson}
For any quantum state $\rho^{AB}$ in system $AB$ and $D_\lambda$ as
the \renyi divergence of order $\lambda$, we have
\begin{align*}
D&_{\lambda} (\rho^{AB}||\mathbbm{1}\otimes \sigma^{B}) \nonumber\\ 
& = D_{\lambda}(\sigma^* ||\sigma^{B}) + \frac{\lambda}{\lambda-1}\log \tr \Big[\tr_{A}
\Big(\rho^{AB}\Big)^{\lambda}\Big]^{\frac{1}{\lambda}},
\end{align*}
\begin{equation*}
\mbox{where } ~~~~ \sigma^{*} =
\frac{\left[\tr_{A}\left(\rho^{AB}\right)^\lambda\right]^{\frac{1}{\lambda}}}{\tr\left[\tr_{A}\left(\rho^{AB}\right)^\lambda\right]^{\frac{1}{\lambda}}}.
\end{equation*}
\end{lemma}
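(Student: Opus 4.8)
The plan is to prove the identity by direct expansion of the left-hand Rényi divergence and to recognise the stated right-hand side as a mere regrouping of the resulting expression; this is an algebraic identity rather than an inequality, so no variational or operator-inequality machinery is needed. First I would write, using the definition $D_\lambda(\rho\|\sigma) = \tfrac{1}{\lambda-1}\ln\tr\rho^\lambda\sigma^{1-\lambda}$,
\[
D_\lambda(\rho^{AB}\|\mathbbm{1}\otimes\sigma^B) = \frac{1}{\lambda-1}\ln\tr\big[(\rho^{AB})^\lambda(\mathbbm{1}^A\otimes(\sigma^B)^{1-\lambda})\big],
\]
where I have used that $(\mathbbm{1}\otimes\sigma^B)^{1-\lambda} = \mathbbm{1}^A\otimes(\sigma^B)^{1-\lambda}$ since the tensor factors commute and the identity is invariant under any power. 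The key reduction is then to carry out the partial trace over $A$ inside the full trace: because the operator acting on $A$ is the identity, the standard relation $\tr_{AB}[X^{AB}(\mathbbm{1}^A\otimes Y^B)] = \tr_B[(\tr_A X^{AB})Y^B]$ yields
\[
\tr\big[(\rho^{AB})^\lambda(\mathbbm{1}^A\otimes(\sigma^B)^{1-\lambda})\big] = \tr_B\big[\tau^B(\sigma^B)^{1-\lambda}\big], \qquad \tau^B := \tr_A(\rho^{AB})^\lambda.
\]

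Next I would introduce the normalisation $Z := \tr(\tau^B)^{1/\lambda} = \tr[\tr_A(\rho^{AB})^\lambda]^{1/\lambda}$, so that the definition of $\sigma^*$ reads $\sigma^* = (\tau^B)^{1/\lambda}/Z$, equivalently $(\sigma^*)^\lambda = \tau^B/Z^\lambda$; note in passing that $\tr\sigma^* = Z/Z = 1$, so $\sigma^*$ is a genuine density matrix and the first term on the right is a bona fide divergence. Substituting $(\sigma^*)^\lambda = \tau^B/Z^\lambda$ into the definition of $D_\lambda(\sigma^*\|\sigma^B)$ gives
\[
D_\lambda(\sigma^*\|\sigma^B) = \frac{1}{\lambda-1}\ln\Big[\frac{1}{Z^\lambda}\tr_B\big[\tau^B(\sigma^B)^{1-\lambda}\big]\Big] = \frac{1}{\lambda-1}\ln\tr_B\big[\tau^B(\sigma^B)^{1-\lambda}\big] - \frac{\lambda}{\lambda-1}\ln Z.
\]
Adding the claimed correction term $\tfrac{\lambda}{\lambda-1}\ln Z$ cancels the last summand and reproduces exactly the expression for $D_\lambda(\rho^{AB}\|\mathbbm{1}\otimes\sigma^B)$ obtained in the first step, which establishes the identity.

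I expect no genuine obstacle here, since everything reduces to pulling the scalar $Z^\lambda$ out of the logarithm; the only points that require care are the operator-function steps. Specifically, the factorisation of $(\mathbbm{1}\otimes\sigma^B)^{1-\lambda}$ relies on commutativity of the tensor factors, and the interchange of power and partial trace is legitimate only when the relevant powers are taken on the appropriate supports. I would therefore carry out the argument assuming $\sigma^B>0$ (so all powers are unambiguous) and recover the general case by continuity, which is harmless because the infimum defining $K_\lambda$ may be restricted to full-rank $\sigma^B$.
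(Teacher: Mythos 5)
Your proposal is correct and follows essentially the same route as the paper: expand the definition, perform the partial trace over $A$ using that the $A$-factor is the identity, and recognise $\tr_A(\rho^{AB})^\lambda$ as $Z^\lambda(\sigma^*)^\lambda$ so that the scalar $Z^\lambda$ pulled out of the logarithm yields exactly the additive correction term. The extra remarks on normalisation of $\sigma^*$ and on restricting to full-rank $\sigma^B$ are sound but do not change the argument.
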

\begin{proof}
For the classical Sibson identity, see Ref. \cite{sibson-1969}. Note that
\begin{align*}
D_{\lambda} &(\rho^{AB}||\mathbbm{1}\otimes \sigma^{B})\\ 
& = \frac{1}{\lambda-1} \log \tr \left(\rho^{AB}\right)^{\lambda} [ \mathbbm{1}\otimes (\sigma^{B})^{1-\lambda} ] \\
& = \frac{1}{\lambda-1}\log\tr \,
\tr_A\left(\rho^{AB}\right)^{\lambda}(\sigma^{B})^{1-\lambda} \\
& = \frac{1}{\lambda-1}\log\tr \left(\sigma^{*}\right)^{\lambda}(\sigma^{B})^{1-\lambda}\\
& \hspace{5mm}+ \frac{\lambda}{\lambda-1}\log \tr \left[\tr_{A}\left(\rho^{AB}\right)^{\lambda}\right]^{{\frac{1}{\lambda}}} \\
& =  D_{\lambda}(\sigma^{*}||\sigma_{B}) + \frac{\lambda}{\lambda-1}\log\tr \left[\tr_{A}\left(\rho^{AB}\right)^{\lambda}\right]^{{\frac{1}{\lambda}}}.
\end{align*}
Since $D_{\lambda}(\sigma^{*}||\sigma_{B}) \geq 0$, choosing $\sigma_B = \sigma^*$ gives
us the minimum and the result follows.
\end{proof}

\section{Proof of Theorem $2$}
For any quantum state $\sigma^{AB}$, $s \in [-1/2,0)$, let
\begin{align*}
g(s) := -\log \tr \left[ \tr_A \left( \sigma^{AB} \right)^{1/(1+s)} \right]^{s+1}
\end{align*}
It easily follows that $g(s) = 0$. To show that
${\partial g(s)}/{\partial s}\big|_{s=0} = I(A\rangle B)_{\sigma}$, we use
the following differentiation rule (Lemma 4 in Ref. \cite{ogawa-1999-converse})
for a Hermitian operator $X(s)$ parametrized by a real parameter $s$
\[
\frac{\partial}{\partial s}\tr g[X(s)] = \tr g^\prime[X(s)] \frac{\partial X(s)}{\partial s}.
\]
Let the spectral decomposition of $\sigma^{AB}$ be
$\sigma^{AB} = \sum_{i}\lambda_i\ket{i} \bra{i}^{AB}$
and let $\sigma_i = \tr_{A} \ket{i} \bra{i}^{AB}$.
Hence, we get
$\sigma^{B} = \tr_{A}\sigma^{AB} = \sum_{i}\lambda_{i}\sigma_i$ and
$\kappa_{1} := \tr_A(\sigma^{AB})^{1/(s+1)} = \sum_{i}\lambda_{i}^{1/(s+1)}
\sigma_i$.
It is easy to see that
${\partial \kappa_1}/{\partial s} = -{\kappa_2}/{(s+1)}$,
where $\kappa_2 = \sum_{i}\lambda_i ^{\frac{1}{s+1}}\log (\lambda_i ^{\frac{1}{s+1}})
\sigma_i$. It now follows that 
\begin{align*}
\frac{\partial g(s)}{\partial s} & =
\frac{\tr \kappa_{1}^{s}(\kappa_2-\kappa_1\log \kappa_1)}{\tr\kappa_1^{s+1}}, \\
\frac{\partial g(s)}{\partial s} \Big |_{s=0} &= \tr \Big[ \sum_{i}\lambda_{i} (\log \lambda_i)
\sigma_i- \big( \sum_i \lambda_i\sigma_i \big)\\
&\hspace{10mm} \log \big(\sum_{i}\lambda_i\sigma_i \big)\Big],\\
& = H(B)_{\sigma}-H(A,B)_{\sigma},\\
& = I(A\rangle B)_{\sigma}.
\end{align*}
We now show that $g(s) + (s+1) \log |A|$ is an increasing function in $s$. Consider
the operators $E_i = \sqrt{\sigma_i/|A|}$. Then $\sum_i E_i^\dagger E_i =$
$\sum_i \tr_A \ket{i} \bra{i}^{AB}/|A|$ $= {\mathbbm{1}}$. Since $x^\gamma$,
$\gamma \in (0,1]$ is operator concave, we have, using the operator Jensen's
inequality and for $1/2 \leq \alpha \leq \beta < 1$, $\gamma = \alpha/\beta$,
\beq
\left( \frac{1}{|A|} \sum_i \lambda_i^{1/\beta} \sigma_i \right)^\beta
\leq \left( \frac{1}{|A|} \sum_i \lambda_i^{1/\alpha} \sigma_i \right)^\alpha,\nonumber
\enq
or $g(\alpha-1) + \alpha \log |A| \leq g(\beta-1) + \beta \log |A|$.

\section{An Alternate Proof for Theorem $3$ using the hockey-stick divergence}

Note that the following set of inequalities hold for the hockey stick divergence.
\begin{fleqn}
\begin{align*}
\cD(& \rho^{A B^n} || \mathbbm{1} \otimes \rho^{B^n} ) \\
 &\stackrel{a}{=} \sum_{2^n {\mathrm{terms}}}
\alpha_{k,n} \cD ( \zeta_{i_1,...,i_{n-k}}^{AB_{i_1} \cdots B_{i_n}} ||
\mathbbm{1} \otimes \zeta_{i_1,...,i_{n-k}}^{B_{i_1} \cdots B_{i_n}} ) \\
&  \stackrel{b}{=} \sum_{2^n {\mathrm{terms}}}
\alpha_{k,n} \cD ( \sigma^{AB_{i_1} \cdots B_{i_{n-k}}} ||
\mathbbm{1} \otimes
\sigma^{B_{i_1} \cdots B_{i_{n-k}}} ) \\
&  \stackrel{c}{\leq} \sum_{2^n {\mathrm{terms}}}
\alpha_{k,n} \cD ( \rho^{AA^\prime_{i_1} \cdots A^\prime_{i_{n-k}}} ||
\mathbbm{1} \otimes \rho^{A^\prime_{i_1} \cdots A^\prime_{i_{n-k}}} ),
\end{align*}
\end{fleqn}
where $a$ follows from orthogonality of $\zeta$'s, $b$ follows since we have
removed the tensors with $\ket{e}\bra{e}$, and $c$ follows from monotonicity
(see Lemma \ref{mono-hstick}).
Using the above, we now have
\begin{fleqn}
\begin{align*}
\cK(A \rangle B^n) & \leq  \cD(\rho^{A B^n} || \mathbbm{1} \otimes \rho^{B^n} ) \\
& \leq \sum_{k=0}^n
{n \choose k} \alpha_{k,n}  \\
&\hspace{6mm} \tr \Big( \rho^{AA^\prime_1 \cdots A^\prime_{n-k}} -
\gamma \mathbbm{1} \otimes
\rho^{A^\prime_1 \cdots A^\prime_{n-k}} \Big)^+ \\
& \leq \sum_{k=0}^{\frac{n}{2} - \lfloor \frac{\log \gamma}{2\log d_A} \rfloor }
{n \choose k} \alpha_{k,n},
\end{align*}
\end{fleqn}
where we have upper bounded $\tr ( \rho^{AA^\prime_1 \cdots A^\prime_{n-k}} -
\gamma \mathbbm{1} \otimes \rho^{A^\prime_1 \cdots A^\prime_{n-k}} )^+$ by $1$
for $k \leq n/2 - \lfloor \log \gamma/(2\log d_A) \rfloor$.
Choose $\log \gamma = n[\cR + Q(\cN)]/2$ in the above equation.
For $\cR > Q(\cN)$, we have $n/2 - \lfloor \log \gamma/(2\log d_A) \rfloor < np$.
Similar to the quantity defined in Property 2 in the main text, we define ${\mathbbmss{D}}$.
Let $\Pi_0 = \ket{0}\bra{0}$ and $\Pi_1 = \ket{1}\bra{1}$
be two projectors with $\Pi_0 + \Pi_1 = \mathbbm{1}$.
Let $\alpha \in [0,1]$, $\beta \in (0,1]$, $\rho = \alpha \Pi_0 + (1-\alpha) \Pi_1$,
$\sigma = \beta \Pi_0 + (1/\beta-\beta) \Pi_1$, and let us define
\beq
{\mathbbmss{D}}(\alpha || \beta) := \mathcal{D}(\rho || \sigma).
\enq
Using the Chernoff bound, the inequality
${\mathbbmss{D}}(\mathbb{F} || e^{-n\cR})  \geq \mathbb{F} - \gamma e^{-n\cR}$,
and Theorem 1 in the main text, we get
\begin{align*}
\mathbb{F} \leq &\exp \left\{-\frac{n}{2}[\cR-Q(\cN)] \right\}\nonumber\\ 
&\hspace{1mm}+\exp \Big\{-\frac{n}{2p} \Big[ \frac{(2p-1)^+}{2} 
+\frac{\cR}{4 \log d_A} \Big]^2 \Big\},\nonumber
\end{align*}
which gives us the strong converse. \\

\section{Monotonicity lemma}

\begin{lemma}
\label{mono-hstick}
Consider the matrices $\rho, \sigma \geq 0$ and a scalar $\gamma > 0$. Then for any CPTP map
$\cE$,
\beq
\tr(\rho - \gamma \sigma)^+ \geq \tr \left[ \cE(\rho) - \gamma \cE(\sigma) \right]^+. \nonumber
\enq
\end{lemma}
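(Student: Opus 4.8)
The plan is to prove this through the variational (dual) characterisation of the trace of the positive part, combined with the defining properties of the Heisenberg-picture adjoint of $\cE$. The identity I would establish first is that for any Hermitian matrix $\kappa$,
\[
\tr \kappa^+ = \max_{0 \leq P \leq \mathbbm{1}} \tr(P\kappa),
\]
where the maximum ranges over all operators $P$ with $0 \leq P \leq \mathbbm{1}$. The inequality $\tr(P\kappa) \leq \tr\kappa^+$ is immediate, since $\tr(P\kappa) = \tr(P\kappa^+) - \tr(P\kappa^-) \leq \tr(P\kappa^+) \leq \tr\kappa^+$ using $0 \leq P \leq \mathbbm{1}$ together with $\kappa^\pm \geq 0$; equality is attained by taking $P$ to be the spectral projector onto the non-negative eigenspace of $\kappa$.

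Next I would introduce the adjoint $\cE^\dagger$ of $\cE$, defined by $\tr[Q\, \cE(X)] = \tr[\cE^\dagger(Q)\, X]$ for all $X$. Because $\cE$ is positive and trace preserving, $\cE^\dagger$ is positive and unital, i.e. $\cE^\dagger(\mathbbm{1}) = \mathbbm{1}$. Consequently, whenever $0 \leq Q \leq \mathbbm{1}$ one has $0 \leq \cE^\dagger(Q) \leq \cE^\dagger(\mathbbm{1}) = \mathbbm{1}$, so $\cE^\dagger$ carries the feasible set of the dual problem into itself. This is the single structural input on which everything rests.

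Applying the variational formula to $\kappa = \cE(\rho) - \gamma \cE(\sigma)$ and then moving the test operator through the adjoint gives
\[
\tr\left[ \cE(\rho) - \gamma \cE(\sigma) \right]^+
= \max_{0 \leq Q \leq \mathbbm{1}} \tr\left[ \cE^\dagger(Q)\, (\rho - \gamma\sigma) \right]
\leq \max_{0 \leq P \leq \mathbbm{1}} \tr\left[ P\, (\rho - \gamma\sigma) \right]
= \tr(\rho - \gamma\sigma)^+,
\]
where the inequality holds because the operators $\cE^\dagger(Q)$ range over a subset of $\{P : 0 \leq P \leq \mathbbm{1}\}$ by the previous paragraph. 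This is exactly the asserted monotonicity, and notably $\gamma$ enters only passively, so the argument works verbatim for every $\gamma > 0$.

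The only genuinely substantive points are the variational identity and the unitality of $\cE^\dagger$; both are standard, but I would highlight that \emph{complete} positivity is never used here, since mere positivity of $\cE^\dagger$ already yields $\cE^\dagger(Q) \geq 0$, while trace preservation of $\cE$ is essential because it is precisely equivalent to $\cE^\dagger(\mathbbm{1}) = \mathbbm{1}$. If one prefers to avoid the adjoint altogether, an equivalent route is to pick a Kraus representation $\cE(X) = \sum_j K_j X K_j^\dagger$ and verify directly that $\tr[P\, \cE(X)] = \tr\big[(\sum_j K_j^\dagger P K_j)\, X\big]$ with $0 \leq \sum_j K_j^\dagger P K_j \leq \mathbbm{1}$ following from $\sum_j K_j^\dagger K_j = \mathbbm{1}$; this is merely the adjoint written out explicitly and leads to the same chain of inequalities.
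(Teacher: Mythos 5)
Your proof is correct. It is essentially the dual (Heisenberg-picture) formulation of the paper's argument: the paper works directly in the Schr\"odinger picture, taking the Jordan decomposition $\rho - \gamma\sigma = Q - S$, using trace preservation to write $\tr Q = \tr\cE(Q)$, and then discarding/subtracting positive terms against the output-side projector $P_{\{\cE(\rho)-\gamma\cE(\sigma)\geq 0\}}$. Your variational identity $\tr\kappa^+ = \max_{0\leq P\leq \mathbbm{1}}\tr(P\kappa)$ is proved by exactly that Jordan-decomposition manipulation, and pushing the test operator through $\cE^\dagger$ repackages the same two facts (positivity and trace preservation of $\cE$, equivalently positivity and unitality of $\cE^\dagger$). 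What your version buys is a cleaner isolation of the structural inputs --- in particular your observation that complete positivity is never used, only positivity, applies equally to the paper's proof --- and it exhibits the lemma as an instance of the standard dual formula underlying data processing for hypothesis-testing-type quantities. The paper's version is marginally more elementary in that it never introduces the adjoint map, but the mathematical content of the two arguments is the same.
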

\begin{proof}
Let the Jordan decomposition of $\rho - \gamma \sigma = Q - S$, where $Q,S \geq 0$. Let
$P := P_{\{ \cE(\rho) - \gamma \cE(\sigma) \geq 0 \}}$. Then
\begin{align*}
\tr(\rho - \gamma \sigma)^+ & = \tr Q \\
& \stackrel{a}{=} \tr \cE(Q)  \\
& \stackrel{b}{\geq} \tr P [\cE(Q) - \cE(S)]  \\
& = \tr \left[ \cE(\rho) - \gamma \cE(\sigma) \right]^+,
\end{align*}
where $a$ follows since $\cE$ is trace preserving, $b$ follows since
we are subtracting non-negative terms.
\end{proof}

\end{document}